\newtheorem{theorem}{Theorem}
\newtheorem{lemma}[theorem]{Lemma}
\newcommand{\calO}{\mathcal{O}}
\newcommand{\ie}{i.e.}
\DeclareMathOperator{\pb}{pb}
\DeclareMathOperator{\spb}{spb}
\newcommand{\csMod}{\ \mathrm{mod} \ }
\begin{document}

\title{Equivalence between pathbreadth and strong pathbreadth}

\author[1,2]{Guillaume Ducoffe}
\affil[1]{\small National Institute for Research and Development in Informatics, Romania}
\affil[2]{\small The Research Institute of the University of Bucharest ICUB, Romania}

\author[3]{Arne Leitert}
\affil[3]{\small Department of Computer Science, Central Washington University}

\date{}

\maketitle

\begin{abstract}
We say that a given graph $G = (V, E)$ has \emph{pathbreadth} at most $\rho$, denoted $\pb(G) \leq \rho$, if there exists a Roberston and Seymour's path decomposition where every bag is contained in the $\rho$-neighbourhood of some vertex.
Similarly, we say that $G$ has \emph{strong pathbreadth} at most $\rho$, denoted $\spb(G) \leq \rho$, if there exists a Roberston and Seymour's path decomposition where every bag is the complete $\rho$-neighbourhood of some vertex.
It is straightforward that $\pb(G) \leq \spb(G)$ for any graph $G$.
Inspired from a close conjecture in [Leitert and Dragan, COCOA'16], we prove in this note that $\spb(G) \leq 4 \cdot \pb(G)$.
\end{abstract}

We refer to~\cite{BoM08} for any undefined graph terminology.
Graphs in this study will be finite, simple, \emph{connected} and unweighted.
Our purpose in this note is to relate two pathlikeness invariants, first introduced in~\cite{DKL17,LeD16}.
Specifically, a (Robertson and Seymour's) \emph{path decomposition} of a given graph $G = (V, E)$ is any sequence $\big( X_1, X_2, \ldots, X_p \big)$ of subsets of $V$, called \emph{bags}, that satisfies the following three properties:
\begin{enumerate}
\item Every vertex $x \in V$ is contained in at least one bag;
\item Every edge $xy \in E$ has its two ends contained in at least one common bag;
\item For every $x \in V$, the bags that contain $x$ induce a consecutive subsequence.
\end{enumerate}
The width of a path decomposition is equal to the largest size of its bags minus one.
The pathwidth of a graph $G$ is the minimum possible width over its path decompositions.
Pathwidth is often used in parameterized complexity as it has many algorithmic applications.
Motivated by the efficient resolution of routing and distance-related problems on graphs~\cite{DrL04}, we rather focus in this note on the metric properties of the bags instead of their size.

The \emph{breadth} of a path decomposition is equal to the smallest integer $\rho$ such that every bag is contained in the $\rho$-neighbourhood of some vertex (this vertex may not be in the bag).
The \emph{pathbreadth} of a graph $G$, denoted $\pb(G)$, is the minimum possible breadth over all its path decompositions.
We stress that bounded-pathbreadth graphs comprise the interval graphs and the convex bipartite graphs, that are two important graph classes with unbounded pathwidth.
However, computing the pathbreadth of a given graph is an NP-hard problem~\cite{DLN16}.

A slightly more amenable parameter than pathbreadth -- unfortunately still NP-hard to compute~\cite{Duc18} -- is \emph{strong pathbreadth}, defined as follows.
The strong pathbreadth of a graph $G$, denoted $\spb(G)$, is the minimum integer $\rho$ such that there exists a path decomposition of $G$ where all bags are the \emph{complete} $\rho$-neighbourhood of some vertex.
Note that we clearly have $\pb(G) \leq \spb(G)$.
It is natural to ask whether, conversely, there exists a universal constant $c$ such that $\spb(G) \leq c \cdot \pb(G)$.
In fact, a similar question was asked in~\cite{LeD16} for the related parameters treebreadth and strong treebreadth (defined using the more general object of tree decompositions).
In this note, we answer positively to this question for pathbreadth and strong pathbreadth.
Namely, we prove the following result:

\begin{theorem}\label{thm:main}
For every graph \( G \), we have \( \pb(G) \leq \spb(G) \leq 4 \cdot \pb(G) \).
\end{theorem}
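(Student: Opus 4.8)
\noindent\emph{Proof idea.}
The plan is to start from an optimal path decomposition $\mathcal{P}=(X_1,\dots,X_p)$ of breadth $\rho := \pb(G)$, together with centres $c_1,\dots,c_p$ such that $X_i \subseteq N^\rho[c_i]$ for every $i$, writing $N^\rho[v]$ for the $\rho$-neighbourhood of $v$. From $\mathcal{P}$ I want to manufacture a new path decomposition all of whose bags are \emph{complete} $4\rho$-neighbourhoods. The first observation is that the two covering axioms are essentially free: as soon as the centres I use include all the $c_i$ and the radius $R$ satisfies $R \ge \rho$, every vertex and every edge of $G$ already sits inside some ball $N^R[c_i]$, since $x,y \in X_i$ forces $d(x,c_i),d(y,c_i)\le \rho\le R$. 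Consequently the entire difficulty is concentrated in the third (consecutiveness) axiom: I must linearly order a family of balls $N^R[v_t]$ so that, for every vertex $u$, the index set $\{t : u\in N^R[v_t]\}$ is an interval. In short, the task reduces to choosing a sequence of centres along which ball-membership satisfies the consecutive-ones property.

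\noindent
First I would normalise $\mathcal{P}$. As $G$ is connected, I may assume (after discarding redundant bags) that consecutive bags intersect, since an empty adhesion $X_i\cap X_{i+1}=\emptyset$ would disconnect $G$; then any $w\in X_i\cap X_{i+1}$ yields $d(c_i,c_{i+1})\le d(c_i,w)+d(w,c_{i+1})\le 2\rho$, so the centre sequence drifts by at most $2\rho$ per step. I would also move each centre into its own bag: picking any $x_i\in X_i$ gives $X_i\subseteq N^{2\rho}[x_i]$, so from now on the centres are genuine vertices of $G$, about which the separation structure of $\mathcal{P}$ can be invoked. Note that radius $2\rho$ already suffices for the covering axioms; the extra factor of two, bringing the radius up to $4\rho=2\cdot 2\rho$, is precisely the slack I expect to spend in forcing the consecutive-ones property. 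The candidate strong decomposition is therefore a suitably ordered (and possibly thinned) family of balls $N^{4\rho}[x_t]$.

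\noindent
The heart of the argument, and the step I expect to be hardest, is verifying consecutiveness at radius $4\rho$: for each vertex $u$ the set of centres $x_t$ with $d(u,x_t)\le 4\rho$ must form a single block in the chosen order. One half is easy, because throughout $u$'s original interval $[l_u,r_u]$ one has $u\in X_t\subseteq N^{2\rho}[x_t]$, and these positions are already contiguous. The real danger is a \emph{distant re-appearance}: a centre far from $[l_u,r_u]$ in the path order that nevertheless lies within $4\rho$ of $u$, producing a second, disjoint block. Ruling this out is the main obstacle. My intended device is to combine the separator structure of $\mathcal{P}$ (each adhesion $X_t\cap X_{t+1}$ separates the vertices finishing before $t$ from those starting after $t$) with a metric thinning of the centre sequence to a well-spread subsequence, so that two centres that are both $4\rho$-close to $u$ cannot straddle a large stretch of separators without violating the spacing. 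I expect this interplay to be exactly what pins the constant at $4$: the $2\rho$ drift of the centres and the $2\rho$ internalisation radius must be absorbed simultaneously, and quantifying how the separator chain constrains re-entries is the delicate point of the whole proof.

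\noindent
Granting this consecutiveness lemma, the three axioms hold and every bag is by construction a complete $4\rho$-neighbourhood, whence $\spb(G)\le 4\rho = 4\cdot\pb(G)$; the inequality $\pb(G)\le\spb(G)$ is immediate from the definitions, since a bag equal to $N^\rho[v]$ is in particular contained in $N^\rho[v]$.
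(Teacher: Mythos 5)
Your proposal has a genuine gap: the entire content of the theorem sits in the consecutiveness axiom, and that is exactly the step you leave unproven (``granting this consecutiveness lemma''). As you note yourself, the two covering axioms are essentially free once the radius exceeds $\rho$. The trouble with your setup is that the centre sequence $x_1,\dots,x_p$ extracted from an optimal breadth-$\rho$ decomposition carries no global metric structure beyond a per-step drift of $O(\rho)$; the sequence can fold back on itself, so a vertex $u$ may lie within $4\rho$ of two centres $x_j,x_k$ that are arbitrarily far apart in the index order, and the separator structure of the adhesions does not obviously exclude this (a $u$--$x_k$ path must cross every adhesion between $r_u$ and $k$, but those adhesions can all be metrically close to one another). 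You gesture at ``a metric thinning of the centre sequence to a well-spread subsequence,'' but you give no rule for which centres to keep, no proof that the survivors are well spread, and no argument that thinning preserves the edge-covering property. So what you have is a plan whose hardest step is still open, not a proof.

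The paper resolves precisely this difficulty by replacing the linear structure before doing anything else: it invokes the result of~\cite{DKL17} that every graph with $\pb(G)\le\rho$ admits a \emph{shortest path} $P$ with eccentricity $\lambda\le 2\rho$, places centres $q_0,\dots,q_L$ at exact distance $2\lambda$ apart along $P$, and takes the bags to be the balls $N^{2\lambda}[q_i]$. Because the centres lie on a common geodesic, $d(q_j,q_k)=2\lambda\,|j-k|$ exactly, so a vertex within $2\lambda$ of both $q_j$ and $q_k$ forces $|j-k|\le 2$: distant reappearances are killed by the triangle inequality alone, and the consecutiveness and edge-covering checks close in a few lines. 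If you want to salvage your route, the missing ingredient is exactly such a well-spread geodesic skeleton of eccentricity $O(\rho)$; that this exists is the content of the cited result and is not something you can extract for free from the decomposition itself.
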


To prove Theorem~\ref{thm:main}, we describe in Algorithm~\ref{algo:constructPhi} below how to to construct a path decomposition with strong breadth at most~$4 \cdot \pb(G)$ for a given graph~$G$.
The \emph{eccentricity} of a shortest path~$P$ in~$G$ is defined in what follows as the maximum distance between any vertex in~$V$ and a closest vertex in~$V(P)$.

\begin{algorithm}[!ht]
\caption
{%
    Computes a path decomposition with strong breadth~$2 \lambda$ for a given graph and a given shortest path with eccentricity~$\lambda$.
}
\label{algo:constructPhi}

\KwIn
{%
    A graph~$G = (V, E)$ and a shortest path~$P = (v_0, v_1, \ldots, v_\ell)$ with eccentricity~$\lambda$.
}

\KwOut
{%
    A path decomposition~$\Phi$ for~$G$ with strong breadth~$2 \lambda$ and the centers~$Q$.
}

Let $P = (v_0, v_1, \ldots, v_\ell)$, set $L := \lfloor \ell / 2 \lambda \rfloor$, and set $\delta := \big \lfloor (\ell \csMod 2 \lambda) / 2 \big \rfloor$.
\label{line:compLdelta}

\For
{%
    \( i := 0 \) \KwTo \( L \)
}
{%
    Let $j = 2 \lambda \cdot i + \delta$ and set $q_i := v_j$.
    \label{line:compQi}

    Compute the bag~$B_i := N^{2 \lambda}[q_i]$ by performing a BFS which starts at~$q_i$ and is limited to distance~$2 \lambda$.
    \label{line:compBi}
}

Output $Q = \{ q_1, q_2, \ldots, q_L \}$ and $\Phi = (B_1, B_2, \ldots, B_L)$.
\end{algorithm}

\begin{lemma}
Algorithm~\ref{algo:constructPhi} constructs a path decomposition~\( \Phi \) for~\( G \) with strong breadth~\( 2 \lambda \) in linear time.
\end{lemma}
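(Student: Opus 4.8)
The plan is to check, in turn, the three path-decomposition axioms, the strong-breadth bound, and the running-time bound; the two genuinely delicate points are the consecutiveness axiom and the edge axiom, and both will rest on two elementary metric facts about the \emph{isometric} path $P$. Write $c_i := 2\lambda i + \delta$, so that $q_i = v_{c_i}$ and $d(v_{c_i}, v_{c_{i'}}) = |c_i - c_{i'}| = 2\lambda\,|i - i'|$. For every vertex $x$ the function $j \mapsto d(x, v_j)$ is $1$-Lipschitz and, since $P$ is a shortest path, satisfies $d(x, v_j) + d(x, v_{j'}) \ge |j - j'|$; moreover, for adjacent $x$ and $y$ we have $|d(x, v_j) - d(y, v_j)| \le 1$ for all $j$. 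The strong-breadth claim is then immediate, since by construction each bag $B_i = N^{2\lambda}[q_i]$ is the complete $2\lambda$-neighbourhood of $q_i$. For the covering axiom, the choice $\delta = \lfloor (\ell \bmod 2\lambda)/2 \rfloor < \lambda$ makes the intervals $[\,c_i - \lambda,\, c_i + \lambda\,]$ cover $[0,\ell]$, so every $v_k$ lies within path-distance $\lambda$ of some center; combining this with the eccentricity hypothesis, any $x$ has a closest path vertex $v_{p(x)}$ with $a_x := d(x, v_{p(x)}) \le \lambda$, and then the center $q_i$ with $|c_i - p(x)| \le \lambda$ satisfies $d(x, q_i) \le a_x + |c_i - p(x)| \le 2\lambda$, i.e.\ $x \in B_i$.

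Next I would prove consecutiveness by showing that $I(x) := \{ i : x \in B_i \}$ is an interval of indices. The sum fact bounds its diameter: if $i, i' \in I(x)$ then $2\lambda\,|i - i'| = |c_i - c_{i'}| \le d(x, q_i) + d(x, q_{i'}) \le 4\lambda$, so $I(x)$ spans at most three centers — in particular every vertex lies in at most three bags. To exclude a gap, suppose $i, i+2 \in I(x)$ but $i+1 \notin I(x)$; then $d(x, q_i), d(x, q_{i+2}) \le 2\lambda$ with $c_{i+2} - c_i = 4\lambda$, so the sum fact forces $d(x, q_i) = d(x, q_{i+2}) = 2\lambda$. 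A short case analysis on the position of $p(x)$ relative to $c_i$ and $c_{i+2}$, using $a_x \le \lambda$ together with the two-sided projection estimate $|c_i - p(x)| - a_x \le d(x, q_i) \le a_x + |c_i - p(x)|$, then yields $d(x, q_{i+1}) \le 2\lambda$, a contradiction. Hence $I(x)$ is consecutive.

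The edge axiom is the step I expect to be the main obstacle, because the triangle inequality only gives $d(y, q_i) \le d(x, q_i) + 1$, which falls one unit short of $2\lambda$ in the worst case. I would call $x$ \emph{problematic} if $\min_i d(x, q_i) = 2\lambda$. If an edge $xy$ has a non-problematic endpoint, say $y$, then its closest center $q_j$ satisfies $d(y, q_j) \le 2\lambda - 1$, whence $d(x, q_j) \le 2\lambda$ and $B_j$ contains both ends. It then remains to treat an edge both of whose ends are problematic. Here the covering estimate above, being tight, forces $a_x = a_y = \lambda$ and pins each of $p(x), p(y)$ to lie at path-distance exactly $\lambda$ from its nearest center(s); consequently the one or two centers at path-distance $\lambda$ all lie in $I(x)$, and likewise for $y$. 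Positions at path-distance $\lambda$ from a center form an arithmetic progression of step $2\lambda$, while $|p(x) - p(y)| \le a_x + 1 + a_y = 2\lambda + 1$; hence the corresponding center-index sets for $x$ and $y$ are equal or adjacent, so they overlap, yielding a common bag. (The analogous configurations at the two ends of $P$, where a flanking center may be missing, are handled in the same way.) This establishes the edge axiom.

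Finally, the overlap bound obtained along the way — each vertex, and hence each edge, belongs to at most three bags — gives the running time: the $L+1$ distance-limited breadth-first searches collectively enqueue every vertex and traverse every incident edge a bounded number of times, for total work $O(|V| + |E|)$.
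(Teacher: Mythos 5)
Your proposal is correct and follows essentially the same route as the paper's proof: covering via the fact that the centers form a $\lambda$-net on the $\lambda$-dominating path, an extremal analysis forcing the distances to equal $\lambda$ (resp.\ $2\lambda$) in the bad cases for the edge and consecutiveness axioms, and the ``each vertex lies in at most three bags'' bound for linear time. Your bookkeeping via indices along $P$, ``problematic'' vertices, and the step-$2\lambda$ arithmetic progression of midpoints is just a cleaner packaging of the paper's case analysis, not a different argument.
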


\begin{proof}
For the first part of the proof, we show that the sequence
\[
    \Phi = \Big( N^{2 \lambda}[q_0], N^{2 \lambda}[q_1], \ldots, N^{2 \lambda}[q_L] \Big)
\]
constructed by the algorithm is a path decomposition for~$G$.
In order to prove this claim, it suffices to prove that $\Phi$ satisfies all the properties of a path decomposition.
Clearly, in that case, $\Phi$ has strong breadth~$2 \lambda$.

\begin{itemize}
\item
We first show that each vertex is contained in a bag.
Observe that, by construction of~$Q$, $d(q_i, q_{i + 1}) = 2 \lambda$ for all $i < L$ and $\min \big \{ d(v_0, q_0), d(q_L, v_\ell) \big \} \leq \lambda$.
It follows that $Q$ is a $\lambda$-dominating set of~$P$.
Since $P$ is a $\lambda$-dominating path for~$G$, we obtain that
$\bigcup_{i = 0}^L N^{2 \lambda}[q_i] \supseteq \bigcup_{i = 0}^\ell N^\lambda[v_i] = V$.
Hence, every vertex is contained in a bag.
\item
Next, we show that each edge is contained in a bag of~$\Phi$.
Let $xy$ be an arbitrary edge.
Suppose for the sake of contradiction that, for every $q_i \in Q$, we have $\{ x, y \} \nsubseteq N^{2 \lambda}[q_i]$.
Since $P$ $\lambda$-dominates $G$, $P$ contains two vertices $x'$ and~$y'$ with $d(x, x') \leq \lambda$ and $d(y, y') \leq \lambda$.
Note that $x' \neq y'$.
Assume that $d(x, x') < \lambda$ or that there is a vertex $q_i \in Q$ with $d(q_i, x') < \lambda$.
Then, $d(q_i, y) \leq d(q_i, x') + d(x, x') + 1 < 2 \lambda + 1$ and, hence, $\{ x, y \} \subseteq N^{2 \lambda}[q_i]$.
It follows that $d(x, x') = d(y, y') = \lambda$, and that $d(q_i, x')$ and $d(q_i, y')$ are at least~$\lambda$ for each~$q_i \in Q$.
Recall that the distance between two consecutive vertices in~$Q$ is exactly~$2 \lambda$.
Hence, $x'$ and~$y'$ are respectively in the middle of two consecutive vertices in~$Q$ with equal distance~$\lambda$ to them.
Note that, since $P$ is a shortest path, $d(x', y') \leq d(x, x') + 1 + d(y, y') \leq 2 \lambda + 1$.
Therefore, there is a vertex~$q_i \in Q$ such that $d(x', q_i) = d(q_i, y') = \lambda$, and $d(x', y') = 2 \lambda$ (otherwise, $d(x', y')$ would be larger than~$2 \lambda + 1$ or $x'$ and~$y'$ would be equal).
But then, $d(q_i, x) \leq d(q_i, x') + d(x, x') \leq 2 \lambda$, $d(q_i, y) \leq d(q_i, y') + d(y, y') \leq 2 \lambda$, and, therefore, $\{ x, y \} \subseteq N^{2 \lambda}[q_i]$.
This contradicts with our assumption that no such~$q_i$ exists.
Altogether combined, it follows that every edge is contained in a bag.
\item
It remains to show that, for each vertex, the bags containing it are consecutive.
Let $x$ be an arbitrary vertex of~$G$.
Suppose for the sake of contradiction that there exist two vertices $q_j, q_k \in Q$ with $j < k - 1$ such that $x \in N^{2 \lambda}[q_j] \cap N^{2 \lambda}[q_k]$ and $x \notin N^{2 \lambda}[q_i]$ for every $i \in \{ j + 1, \ldots, k - 1 \}$.
Observe that $d(q_j, q_k) \leq d(q_j, x) + d(x, q_k) \leq 4 \lambda$.
Since $P$ is a shortest path and $j < k - 1$, we deduce that $d(q_j, x) = d(q_k, x) = 2 \lambda$ and, hence, that there is a vertex~$q_i \in Q$ with $d(q_j, q_i) = d(q_k, q_i) = 2 \lambda$, \ie, $q_i$ is between $q_j$ and~$q_k$.
However, since $P$ is a $\lambda$-dominating path for~$G$, $P$ contains a vertex~$x'$ with $d(x, x') \leq \lambda$.
By the triangle inequality, we have $d(q_j, x') \leq d(q_j, x) + d(x, x') \leq 3 \lambda$, and in the same way $d(x', q_k) \leq 3 \lambda$.
Since $d(q_j, q_k) = 4 \lambda$, this implies that $x'$ is between $q_j$ and~$q_k$ in~$P$ and that $d(x', q_i) \leq \lambda$.
Thus, $x \in N^{2 \lambda}[q_i]$ which contradicts with our original assumption.
Therefore, all the bags that contain~$x$ induce a consecutive subsequence.
\end{itemize}
Overall, $\Phi$ satisfies all the properties of a path decomposition, thereby proving the claim.

We now show that $\Phi$ can be constructed in linear time.
Calculating $L$ and~$\delta$ (line~\ref{line:compLdelta}) as well as determining all vertices~$q_i$ (line~\ref{line:compQi}) can easily be done in linear time.
To show that constructing all bags (line~\ref{line:compBi}) requires linear time in total, we recall that the distance between two consecutive vertices in~$Q$ is exactly~$2 \lambda$.
Thus, if a vertex~$v$ is contained in the bags $B_i = N^{2 \lambda}[q_i]$ and~$B_j = N^{2 \lambda}[q_{i + 2}]$ for some~$i$, then $d(q_i, v) = d(q_{i + 2}, v) = 2 \lambda$.
That is, $v$ is on the boundary of the bags $B_i$ and~$B_j$.
As a result, each vertex of~$G$ can be in at most three bags and each edge of~$G$ is in at most two bags.
Therefore, performing a BFS which is limited to distance~$2 \lambda$ on each vertex~$q_i$ requires at most $\calO(3n + 2m)$ time, \ie, line~\ref{line:compBi} runs in total linear time.
\end{proof}

Based on Algorithm~\ref{algo:constructPhi}, we can now prove Theorem~\ref{thm:main}.

\begin{proof}[Proof of Theorem~\ref{thm:main}]
Note that each graph~$G$ contains a shortest path~$P$ with eccentricity~$\lambda \leq 2 \pb(G)$~\cite{DKL17}.
Performing Algorithm~\ref{algo:constructPhi} on~$P$ then creates a path decomposition for~$G$ with strong breadth~$2 \lambda \leq 4 \pb(G)$.
It follows that $\spb(G) \leq 4 \pb(G)$ for any graph~$G$, thereby proving Theorem~\ref{thm:main}.
\end{proof}

\paragraph{Algorithmic applications.}
An \emph{asteroidal triple} in a given graph~$G$ is an independent set of size three in~$G$ such that each pair of two vertices in the triple is joined by a path that avoids the closed neighbourhood of the third one.
A graph is called \emph{AT-free} if it does not have any asteroidal triple.
It is known that each AT-free graph~$G$ has a vertex pair~$x, y$ such that each path from $x$ to~$y$ has eccentricity~$1$; such a pair can be found in linear time~\cite{COS97}.
We can now compute a shortest path~$P$ from $x$ to~$y$ and perform Algorithm~\ref{algo:constructPhi} on~$P$.
The output is a path decomposition with strong breath~$2$ for~$G$.
Therefore, we get the following (improving a result from~\cite{DKL17}):

\begin{theorem}
    \label{theo:ATfree}
If a graph~\( G \) is AT-free, a path decomposition for~\( G \) with strong breath~\( 2 \) can be computed in linear time.
\end{theorem}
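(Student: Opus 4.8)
The plan is to reduce Theorem~\ref{theo:ATfree} to the machinery already developed, so that almost all the work is done by Algorithm~\ref{algo:constructPhi} and the preceding lemma. First I would invoke the structural fact about AT-free graphs stated just before the theorem: every AT-free graph $G$ admits a pair of vertices $x, y$ such that \emph{every} shortest path from $x$ to $y$ has eccentricity~$1$, and such a pair can be located in linear time~\cite{COS97}. This is the crucial input, because it hands us a shortest path~$P$ whose eccentricity parameter $\lambda$ equals~$1$ at no cost beyond a linear-time preprocessing step.

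Next I would compute a concrete shortest path $P = (v_0, v_1, \ldots, v_\ell)$ from $x$ to $y$ by a single BFS, which runs in linear time. By the cited property, this $P$ has eccentricity $\lambda = 1$; that is, every vertex of $G$ is within distance~$1$ of some vertex of $P$. I would then feed $P$ into Algorithm~\ref{algo:constructPhi} with parameter $\lambda = 1$. The preceding lemma guarantees that the algorithm outputs a genuine path decomposition $\Phi$ of $G$ in which every bag is exactly $N^{2\lambda}[q_i] = N^{2}[q_i]$ for some vertex $q_i$, and that this decomposition is produced in linear time. Since every bag is a complete $2$-neighbourhood, the decomposition has strong breadth $2\lambda = 2$, which is precisely the claimed bound.

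The only point requiring a moment's care is the running time bookkeeping: I must confirm that the total time is linear, combining the linear-time pair selection from~\cite{COS97}, the single BFS to realize $P$, and the linear-time guarantee of Algorithm~\ref{algo:constructPhi} established in the lemma. Each of these three phases is individually linear, and they compose additively, so the overall procedure is linear. I expect no genuine obstacle here, since the main technical difficulty — verifying that the output is a valid path decomposition of strong breadth $2\lambda$ and that the BFS-based construction runs in linear time — has already been discharged in full generality by the lemma. The entire argument is therefore a specialization of the general result to the case $\lambda = 1$, made available precisely by the defining eccentricity-$1$ property of AT-free graphs.

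\medskip
\noindent\emph{Remark on the improvement.} To situate the result relative to~\cite{DKL17}, I would note that the gain comes from upgrading an ordinary (weak) breadth guarantee to a \emph{strong} breadth guarantee at the same constant, which is exactly what Algorithm~\ref{algo:constructPhi} delivers; no loss of the constant factor~$4$ from Theorem~\ref{thm:main} is incurred because here we start directly from an eccentricity-$1$ path rather than bounding eccentricity in terms of $\pb(G)$.
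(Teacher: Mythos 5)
Your proposal is correct and follows essentially the same route as the paper: locate the dominating pair $x,y$ via~\cite{COS97} in linear time, take a shortest $x$--$y$ path (which has eccentricity $\lambda = 1$), and run Algorithm~\ref{algo:constructPhi} to obtain a path decomposition of strong breadth $2\lambda = 2$ in overall linear time. No gaps.
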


Note that a decomposition as constructed by Algorithm~\ref{algo:constructPhi} is not necessarily optimal for all AT-free graphs.
See Figure~\ref{fig:ATfree} below for an example.

\begin{figure}
    [!htb]
    \centering
    \includegraphics{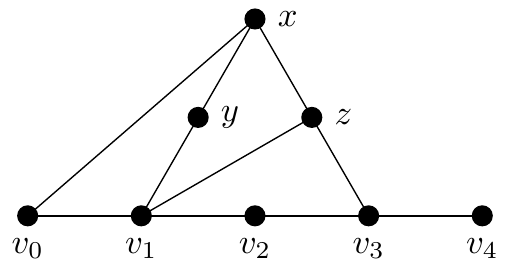}
    \caption
    {%
        An AT-free graph~$G$.
        The decomposition $\big( N^2[v_0], N^2[v_2], N^2[v_4] \big)$ (as constructed by Algorithm~\ref{algo:constructPhi}) has strong breadth~$2$.
        The decomposition $\big( N[v_0], N[y], N[z], N[v_2], N[v_4] \big)$, however, has strong breadth~$1$.
        \label{fig:ATfree}
    }
\end{figure}

Algorithm~\ref{algo:constructPhi} also allows to approximate the strong pathbreadth of a given graph with a constant approximation factor.

\begin{lemma}
Let~\( k \) be the minimum eccentricity of a shortest path in a graph \( G \).
If there is an algorithm that finds a shortest path in~\( G \) with eccentricity~\( \phi k + \psi \) in \( \calO \big( T(G) \big) \) time, then there is an algorithm to construct a path decomposition for~\( G \) with strong breadth at most~\( 4 \phi \spb(G) + 2 \psi \) in \( \calO \big( T(G) + n + m \big) \) time.
\end{lemma}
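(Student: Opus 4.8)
The plan is to chain the hypothesised shortest-path routine with Algorithm~\ref{algo:constructPhi}, and then to control the strong breadth of the output by relating the optimal eccentricity~$k$ to~$\spb(G)$. First I would run the assumed algorithm on~$G$ to obtain, in $\calO(T(G))$ time, a shortest path~$P$ whose eccentricity~$\lambda$ satisfies $\lambda \leq \phi k + \psi$; if $\lambda$ is not returned explicitly, it can be recomputed within an additional linear-time BFS launched simultaneously from all vertices of~$V(P)$. Feeding~$P$ into Algorithm~\ref{algo:constructPhi} then produces, in $\calO(n + m)$ further time (by the earlier lemma establishing that algorithm), a path decomposition~$\Phi$ of~$G$ whose strong breadth equals~$2\lambda$. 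The two phases together therefore run in $\calO(T(G) + n + m)$ time, matching the claimed bound.

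It remains to bound the strong breadth~$2\lambda$ of~$\Phi$ by $4\phi\,\spb(G) + 2\psi$. Since $\lambda \leq \phi k + \psi$, we have $2\lambda \leq 2\phi k + 2\psi$, so the whole statement reduces to the single inequality $k \leq 2\,\spb(G)$. To establish it, I would recall the result of~\cite{DKL17} already used in the proof of Theorem~\ref{thm:main}: every graph admits a shortest path of eccentricity at most~$2\pb(G)$. Because $k$ is, by definition, the minimum eccentricity over all shortest paths of~$G$, this gives $k \leq 2\pb(G)$, and combining it with the trivial monotonicity $\pb(G) \leq \spb(G)$ yields $k \leq 2\pb(G) \leq 2\,\spb(G)$. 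Substituting into the previous estimate gives
\[
    2\lambda \;\leq\; 2\phi k + 2\psi \;\leq\; 4\phi\,\spb(G) + 2\psi ,
\]
as desired.

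The only genuinely non-routine ingredient is the inequality $k \leq 2\,\spb(G)$; everything else is a bookkeeping composition of the two algorithms and an additive accounting of their running times. The main point to watch is that the approximation guarantee of the input algorithm is phrased relative to the optimum~$k$ rather than relative to~$\pb(G)$ or~$\spb(G)$, so the factor~$2$ lost in passing from~$k$ to~$\spb(G)$ is exactly what produces the factor~$4\phi$ in the final bound; no further slack is introduced by Algorithm~\ref{algo:constructPhi}, which converts eccentricity~$\lambda$ into strong breadth~$2\lambda$ without loss.
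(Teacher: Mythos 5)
Your proposal is correct and follows essentially the same route as the paper: run the assumed approximation algorithm, feed the resulting shortest path into Algorithm~\ref{algo:constructPhi}, and bound the resulting strong breadth $2\lambda \leq 2(\phi k + \psi)$ via the chain $k \leq 2\pb(G) \leq 2\spb(G)$ from~\cite{DKL17}. The only addition is your remark about recomputing $\lambda$ by a multi-source BFS, a harmless implementation detail the paper leaves implicit.
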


\begin{proof}
Let $P$ be a shortest path in~$G$ such that $P$ has eccentricity~\( \phi k + \psi \) and let $\Phi$ be a path decomposition constructed by performing Algorithm~\ref{algo:constructPhi} on~$P$.
By construction, $\Phi$ has strong breadth~$2(\phi k + \psi)$.
Recall that each graph~$G$ contains a shortest path with eccentricity~$k \leq 2 \pb(G) \leq 2 \spb(G)$.
Therefore, $\Phi$ has strong breadth~$2(\phi k + \psi) \leq 4\phi \spb(G) + 2 \psi$.
Since Algorithm~\ref{algo:constructPhi} runs in linear time, it takes in total $\calO \big( T(G) + n + m \big)$ time to construct~$\Phi$.
\end{proof}

Note that there is an $\calO \big( n^3 \big)$-time algorithm which finds a $2$-approximation for the Minimum Eccentricity Shortest Path problem~\cite{DraganLeiter2017}, and there is a linear-time algorithm which finds a $3$-approximation~\cite{BDP16}.
Therefore, we can conclude as follows:

\begin{theorem}
The strong pathbreadth of a given graph can be approximated by a factor~\( 8 \) in \( \calO \bigl( n^3 \bigr) \) time and by a factor~\( 12 \) in linear time.
\end{theorem}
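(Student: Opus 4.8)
The plan is to derive this theorem as a direct corollary of the preceding lemma, instantiated at the two approximation algorithms for the Minimum Eccentricity Shortest Path (MESP) problem cited immediately before the statement. The MESP problem asks for a shortest path of minimum eccentricity, and its optimal value is exactly the quantity~$k$ appearing in that lemma. Consequently, a multiplicative $c$-approximation for MESP is precisely an algorithm that outputs a shortest path of eccentricity at most~$c \cdot k$, which fits the form~$\phi k + \psi$ of the lemma with $\phi = c$ and $\psi = 0$. So the whole argument reduces to reading off the parameters of the two cited algorithms and substituting them into the lemma.

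First I would record the two inputs. The $\calO\bigl( n^3 \bigr)$-time $2$-approximation corresponds to $\phi = 2$, $\psi = 0$, and $T(G) = \calO\bigl( n^3 \bigr)$; the linear-time $3$-approximation corresponds to $\phi = 3$, $\psi = 0$, and $T(G) = \calO(n + m)$. Next I would feed each pair of parameters into the lemma. For the first, it yields a path decomposition of strong breadth at most $4 \cdot 2 \cdot \spb(G) + 0 = 8 \, \spb(G)$, built in $\calO\bigl( n^3 + n + m \bigr)$ time; since $n + m = \calO\bigl( n^2 \bigr)$ is dominated, the running time collapses to $\calO\bigl( n^3 \bigr)$, giving the factor-$8$ claim. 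For the second, the lemma yields strong breadth at most $4 \cdot 3 \cdot \spb(G) + 0 = 12 \, \spb(G)$ in $\calO\big( (n + m) + n + m \big) = \calO(n + m)$ time, which is linear, giving the factor-$12$ claim.

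I do not anticipate a genuine obstacle, as the proof is a plug-in into an already-established lemma. The only two points needing care are, first, confirming that the cited MESP approximations are purely multiplicative, so that one may set $\psi = 0$ and no additive slack inflates the approximation factor beyond~$4\phi$; and second, verifying that the additive $\calO(n + m)$ overhead incurred by running Algorithm~\ref{algo:constructPhi} is absorbed into the stated time bounds in both regimes (dominated by $n^3$ in the first case, and still linear in the second). Both checks are immediate, so the statement follows at once.
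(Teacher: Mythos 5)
Your proposal is correct and matches the paper's argument exactly: the paper likewise obtains this theorem by plugging the cited $\calO\bigl(n^3\bigr)$-time $2$-approximation and linear-time $3$-approximation for the Minimum Eccentricity Shortest Path problem into the preceding lemma with $\psi = 0$, yielding factors $4 \cdot 2 = 8$ and $4 \cdot 3 = 12$ respectively. The running-time bookkeeping you carry out is the same as (indeed slightly more explicit than) what the paper leaves implicit.
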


\section*{Acknowledgements}

This work was supported by the Institutional research programme PN 1819 ``Advanced IT resources to support digital transformation processes in the economy and society - RESINFO-TD'' (2018), project PN 1819-01-01 ``Modeling, simulation, optimization of complex systems and decision support in new areas of IT\&C research'', funded by the Ministry of Research and Innovation, Romania.

\bibliographystyle{abbrv}
\bibliography{pb-spb-rel}

\end{document}